\newtheorem{definition}{Definition}
\newtheorem{proposition}[definition]{Proposition}
\newtheorem{theorem}[definition]{Theorem}
 \def\hyph{-\penalty0\hskip0pt\relax}
\begin{document}

\shortauthors{Ottmann et~al.}

\author[1]{Jenny Ottmann}[orcid=0000-0003-1090-0566]
\cormark[1]
\ead{jenny.ottmann@fau.de}
\credit{Conceptualization, Methodology, Investigation, Software, Writing - Original Draft, Writing - Review and Editing}

\author[1]{{\"U}same Cengiz}[orcid=0009-0004-4092-7668]
\ead{uesame.cengiz@fau.de}
\credit{Methodology, Investigation, Writing - Review and Editing}

\author[2]{Frank Breitinger}[orcid=0000-0001-5261-4600]
\ead{frank.breitinger@unil.ch}
\ead[url]{https://FBreitinger.de}
\credit{Conceptualization, Writing - Original Draft, Writing - Review and Editing, Supervision}

\author[1]{Felix Freiling}[orcid=0000-0002-8279-8401]
\cormark[1]
\ead{felix.freiling@fau.de}
\credit{Conceptualization, Writing - Original Draft, Writing - Review and Editing, Supervision}

\address[1]{Department of Computer Science,
  Friedrich-Alexander-Universit\"at Erlangen-N\"urnberg (FAU),
  Erlangen, Germany}

\address[2]{School of Criminal Justice,
  University of Lausanne, 1015 Lausanne, Switzerland}

\cortext[1]{Corresponding authors.}

\title[mode=title]{As if Time Had Stopped -- Checking Memory Dumps for Quasi-Instantaneous Consistency}
\shorttitle{As if Time Had Stopped}
\begin{abstract}
	Memory dumps that are acquired while the system is running often contain
	inconsistencies like page smearing which hamper the analysis.  One
	possibility to avoid inconsistencies is to pause the system during the
	acquisition and take an instantaneous memory dump. While this is
	possible for virtual machines, most systems cannot be frozen and thus
	the ideal dump can only be quasi\hyph instantaneous, i.e., consistent despite
	the system running.
	In this article, we introduce a method allowing us
	to measure quasi\hyph instantaneous consistency and show both, theoretically,
	and practically, that our method is valid but that in reality, dumps can be but usually are not quasi\hyph instantaneously consistent. For the assessment, we run a pivot
	program enabling the evaluation of quasi\hyph instantaneous consistency for
	its heap and allowing us to pinpoint where exactly inconsistencies
	occurred.
\end{abstract}

\begin{keywords}
	Memory acquisition \sep Consistency \sep Quasi-instantaneous consistency\sep Instantaneous snapshot \sep Experiment \sep Live system memory capture
\end{keywords}

\maketitle

\section{Introduction}


The acquisition and analysis of main memory are common tasks for
forensic investigators, e.g., to find encryption keys for storage or
analyze malware that only runs in memory. A common acquisition
procedure is to perform \emph{live memory acquisition}, i.e., to
utilize the software on the system under investigation to access and
dump memory. However, as the system is running and memory contents are
continuously updated by concurrent processes, the quality of such
snapshots is (at best) unclear.  A symptom of bad memory snapshots,
that is commonly observed, is \emph{page smearing} which is defined as
``an inconsistency that occurs in memory captures when the acquired
page tables reference physical pages whose contents changed during the
acquisition process'' \citep{pathforw}.  It is well-known that
established tools like Volatility have difficulties parsing low-quality
memory snapshots, resulting in cases where snapshots
cannot be analyzed at all. But what, actually, is a ``good'' memory
snapshot?


In practice, it is commonly accepted that freezing a system, i.e.,
stopping concurrent system activity before taking a memory snapshot,
produces the highest quality. Such snapshots are often referred to as
\emph{instantaneous} snapshots. Methods to create instantaneous
snapshots either have strong assumptions, e.g., assume that the
analyzed system runs as a virtual machine
\citep{martignoni2010,yu2012,kiperberg2019}, or are cumbersome to
execute, like cold boot attacks
\citep{DBLP:journals/cacm/HaldermanSHCPCFAF09,DBLP:journals/di/0004GF16}.
Therefore, in many practical situations memory acquisition is
necessarily performed live and the resulting snapshots are not
instantaneous. But in what sense can non-instantaneous snapshots be
compared regarding quality?

It has been observed \citep{introducing,defining} that certain
snapshots acquired live cannot be distinguished from instantaneous
snapshots. Such snapshots are called \emph{time-consistent}
\citep{introducing} or \emph{quasi-instantaneous}
\citep{defining}. By definition quasi-instantaneous snapshots avoid the many
hassles associated with live memory acquisition, but unless the memory
acquisition method itself provides consistency guarantees,
it was not known how memory snapshots can be tested for
quasi-instantaneous consistency. Clearly, such methods must rely on
some form of consistency indicators within the image. How these may
look like to precisely determine the consistency of a snapshot was so
far unclear. In this article, we describe a method to measure
quasi\hyph instantaneous consistency of memory snapshots based on
well-defined consistency indicators.  



\subsection{Related work}


After multiple works about the quality of
memory dumps \citep{Inoue:2011:VIT,Lempereur:2012:PAP,Campbell:2013:VMA},
three formal criteria for the assessment of a memory
dump's quality were defined by \citet{correctness}: \emph{correctness},
\emph{atomicity}, and \emph{integrity}.
Correctness is fulfilled if the contents of the memory dump are an exact copy of
the memory contents at the time of their acquisition. Atomicity addresses the
causal consistency of the memory dump. It depends on the cause-effect
relationships between memory accesses by different processes. The last criterion,
integrity, is assessed in relation to a point in time shortly before the memory
acquisition is started. Memory contents that change after this point in time
and before they were copied by the memory acquisition program lower the degree
of integrity of the memory dump. Two applications of the criteria for practical
evaluations of memory acquisition methods followed: one with a white-box testing
method \citep{evalplat}, and one with a black-box testing method
\citep{evaluatingat}.

In contrast to abstract measures such as atomicity, \citet{introducing} took a content-based approach to
assess the consistency of a memory dump. A memory dump is \emph{time-consistent}
if there ``exists a hypothetical atomic acquisition process that could have
returned the same result''. One method they applied in their evaluation to
assess the consistency of a memory dump is the number of virtual memory areas
(VMAs) that are attributed to a task by different sources. If the numbers differ
an inconsistency in a memory dump has been spotted.

Based on the idea of time consistency, \citet{defining} introduced two
formal criteria, \emph{instantaneous consistency}, and
\emph{quasi\hyph instantaneous} consistency. While the former
criterion portrays the ideal case for memory acquisition, pausing the
system's execution and copying all memory contents at the same time,
the latter can be fulfilled even if the system cannot be paused.  It
requires that the contents of the memory dump could have also been
acquired with a hypothetical instantaneous snapshot. Or in other
words, there was a time at which the dump's contents were
coexistent in memory. Therefore, a memory dump that fulfills the
latter criterion is as consistent as an instantaneous snapshot.  So
while quasi\hyph instantaneous consistency is as good as instantaneous
consistency, \citet{defining} fail to give a method to check or
observe it. Such a method would allow testing snapshots of benchmark
acquisition methods to gain trust in data and methods.

\subsection{Contributions}

In this paper, we devise a method with which (under certain
assumptions) it is possible to find out whether a portion of a snapshot is
quasi\hyph instantaneously consistent. Assumptions are the existence of
consistency indicators in memory. These represent information on the
last event that happened in a particular memory region and that potentially 
changed the content of that region. This extends the content-based approach of \citet{introducing}. 
Given such indicators, we show how it is
possible to test whether a snapshot is quasi\hyph instantaneously consistent.
Furthermore, if a memory dump is not quasi\hyph instantaneously consistent, we can
use the output of the method to assess the degree of inconsistency.

We present a formalization of the method and prove its correctness. We
also show how the necessary data structure for storing consistency
indicators can be implemented with increasingly efficient storage
requirements.
In a practical evaluation, we apply the method to frozen and live snapshots. As
expected, snapshots of frozen systems are always quasi\hyph instantaneously
consistent, those taken of live systems not necessarily.
In summary, the contributions of this paper are threefold: We provide
\begin{itemize}
\item a method to observe quasi\hyph instantaneous consistency,
\item a proof that it works theoretically, and
\item a proof-of-concept implementation that allows measuring
  consistency indicators in practice.
\end{itemize}

While we focus on main memory, our approach can naturally be applied
to situations in which other forms of storage (like persistent disk
storage) are acquired in a live fashion.

\subsection{Outline}

We first revisit the system model and previous consistency definitions
in Section~\ref{sec:definitions}. Our new method to observe and check
quasi\hyph instantaneous snapshots is presented in
Sections~\ref{sec:observing} and \ref{sec:checking}. Ways to improve
the memory efficiency of our method are discussed in
Section~\ref{sec:efficiency}. We provide the results of our practical
evaluation in Section~\ref{sec:evaluation} and discuss the results in
Section~\ref{sec:discussion}. We conclude in
Section~\ref{sec:conclusion}.

\section{Consistency of Snapshots}
\label{sec:definitions}

The consistency of a snapshot can be assessed from different
perspectives.  One is the causal perspective which takes into account
the active processes in the system and their causal relationships
\citep{correctness}. The basic idea of \emph{causal} consistency is
that the snapshot contains the cause for every effect. If the actions
of malware can be observed in the snapshot, all causally preceding
events must also be contained in the snapshot (e.g. the malware
infection). This definition is very generic and does not reference any
notion of real-time. As long as cause-effect relations are respected,
the system does not need to be frozen to acquire a snapshot that is
causally consistent.

The perspective we take in this article is more restrictive. We accept snapshots
 as consistent only if their contents were coexistent in memory at a previous
point in time. This consistency criterion is called \emph{quasi\hyph instantaneous
consistency} \citep{defining}.
To approach the formal definition of quasi\hyph instantaneous
consistency, we need to introduce some basic aspects of the system
model we assume.


\subsection{Model}\label{sec:model}
Based on \citet{correctness}, we define memory, events (modifying operations
on memory), and snapshots.

\paragraph{Memory}
We observe accesses to the set $R=\{r_1,\ldots,r_n\}$ of $n$ memory
regions. Intuitively, a memory region can be regarded as that part of
memory that can be acquired in one atomic action. Depending on the
real system, memory regions can consist of a single byte or a full
memory page.  Memory regions have values $v$ at specific points in
time. The sets of all possible values and points in time are denoted
$V$ and $T$, respectively. Memory can therefore be expressed by the
function $m: R \times T \rightarrow V$.

\paragraph{Events}
When a process performs an operation on a memory region this results
in an event $e$. We denote by $E$ the set of all such events. For any
event $e \in E$, $e.r$ denotes the memory region on which $e$
happened. An execution of the system is defined by a sequence of
events $\eta := [e_1,\ldots]$. As time between two events is of no
concern to our model, we define $T$ to be the set of natural numbers
$\mathbb{N}$. We assume that events generally \emph{change} memory
contents. So if no event happens on a region $r$ between times $t$ and
$t+n$, then the corresponding values in the memory are identical.
Formally:
$\forall r \in R, \forall t,n \in \mathbb{N}: m(r,t) = m(r, t+n)
\Leftrightarrow \forall k, t < k \leq t+n: e_k.r \neq r$.

\paragraph{Snapshot}

We formalize a snapshot as a function $s: R \rightarrow V \times T$,
i.e., for every memory region we store the value and the time at which it was
copied. We denote by $s(r).v$ the value stored
for region $r$ in snapshot $s$ and by $s(r).t$ the corresponding
time. Note, as established above, the time $t$ advances whenever an event is
executed. The vector containing all values in all regions of the snapshot is denoted $V_s :=
[s(r_1).v, \ldots, s(r_n).v]$, the vector containing all times $T_s
:= [s(r_1).t, \ldots, s(r_n).t]$.

The model can be visualized using space/time diagrams \citep{virtual}. An
example for a system with three memory regions, $r_1$, $r_2$, and $r_3$ is shown
in Fig.~\ref{fig:model}. The arrows represent the memory regions over time,
events, $e_1$, $e_2$, $e_3$, and $e_4$ in the example, are denoted by black
dots.  The time at which a memory region is copied in a snapshot is denoted with
a rectangle. The rectangles belonging to one snapshot are connected to each
other. In the example two snapshots, $s_1$ and $s_2$, can be seen.

\begin{figure}
  \centering
  \includegraphics[scale=0.6]{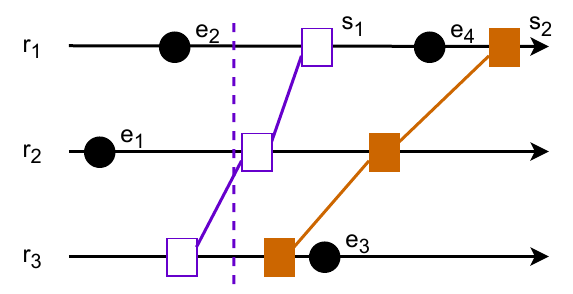}
  \caption{With space/time diagrams the events that take place on
    memory regions over time can be visualized. Each region is
    represented by an individual timeline (time passes from left to
    right), events are denoted as black dots and the acquisition of a
    memory region in a snapshot as rectangle.}
  \label{fig:model}
\end{figure}

\subsection{Quasi-instantaneous consistency}

\citet{defining} defined the following notions of consistency.
The ideal case for a snapshot is that it is taken \emph{instantaneously}. In a
snapshot that satisfies instantaneous consistency every memory region was copied
at the same time. 

\begin{definition}[instantaneous consistency]
  A snapshot $s$ satisfies \emph{instantaneous consistency} iff all
  memory regions in $s$ were acquired at the same point in
  time. Formally:
  $$ \forall r, r'\in R: s(r).t = s(r').t$$
  If s satisfies instantaneous consistency we call s instantaneous.
\end{definition}

When a system cannot be frozen it might still be possible to acquire a snapshot
with the same contents as if it had been taken instantaneously. In this case the 
content is identical to an instantaneous snapshot (although not taken instantaneously) 
and we call such a snapshot quasi\hyph instantane\-ously consistent.

\begin{definition}[quasi-instantaneous consistency]
	\label{def:quasi}
  A snapshot $s$ satisfies \emph{quasi\hyph instantaneous consistency} iff
  the values in the snapshot could have also been acquired with an
  instantaneous snapshot $s'$. Formally:
  %
  $$
	  \exists s': s' \textrm{is instantaneous} \mathrel{\land} (V_{s'} = V_s)
  $$
  %
  If s satisfies quasi\hyph instantaneous consistency we call s quasi\hyph instantaneous.
\end{definition}

Two example snapshots are shown in Fig.~\ref{fig:model}. Snapshot
$s_1$ is quasi\hyph instantaneous since an instantaneous snapshot can
be found that would have had the same contents. Such an instantaneous
snapshot could have been taken right after event $e_2$ took place
and is indicated by a dashed vertical line. For the second snapshot,
$s_2$, on the other hand, it is not possible to construct an
instantaneous snapshot with the same contents.  The reason is that
event $e_3$ happened before $e_4$ but in the snapshot the changes made
by $e_3$ cannot be seen while those made by $e_4$ are included.  Thus,
snapshot $s_2$ is \emph{not} quasi\hyph instantaneous.

\section{Observing Quasi-Instantaneous Consistency}
\label{sec:observing}

Our approach to observe quasi\hyph instantaneous consistency is based
on the observation of \emph{consistency indicators} within the
snapshot. Oftentimes, such indicators already exist as part of the
running system. For example, kernel data structures that save
redundant information can serve as indicators \citep{introducing}.
However, artificial consistency indicators can also be deployed as
part of general forensic readiness procedures or within the memory
manage\-ment of individual processes.

\subsection{Current time and time of last event}\label{sec:idea}

\begin{figure}
  \centering
  \includegraphics[scale=0.6]{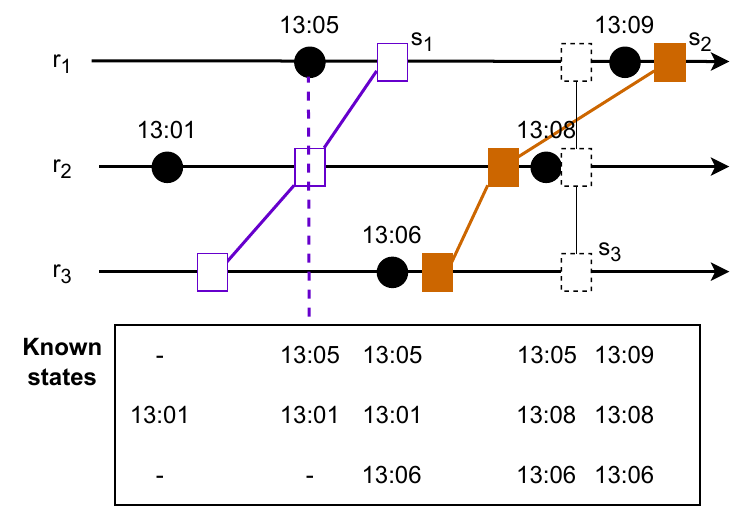}
  \caption{Knowledge about the times at which events occurred on memory regions
	allows to determine if a snapshot is quasi\hyph instantaneous. Given a
	snapshot, a vector of the times at which the last event relative to the
	snapshot occurred can be formed. If this vector matches one of the known
	sates, the snapshot is quasi\hyph instantaneous.}
  \label{fig:idea}
\end{figure}

If we want to know exactly in which memory regions inconsistent
contents are located, knowledge about previous states of the memory
contents is necessary. An example is shown in Fig.~\ref{fig:idea} where 
events happen in real-time and the timestamps of events are recorded in a 
table shown below the space/time diagram. Note, such a data structure of all event
timestamps enumerates all possible instantaneous snapshots since
memory contents only change through events. For example, the
instantaneous snapshot taken right after the event at 13:08, $s_3$ in the
figure, would contain data as changed by the events at
13:05 (on region $r_1$), 13:08 (on region $r_2$) and 13:06 (on region $r_3$).

To identify if a snapshot is quasi\hyph instantaneous, we need to find
the ``matching'' instantaneous snapshot in the list of all
instantaneous snapshots described above. To do this, we can either
``scan'' the data structure from beginning to end, or search in the
vicinity of the timestamps that are stored in the snapshot. For a more
specific search, the ability to determine the time of the last event
on each memory region relative to the time at which the snapshot was
taken on that region is helpful. For example, for snapshot $s_1$, the
time of the most recent event on $r_1$ is 13:05. If the vector of
these time stamps matches one of the possible instantaneous snapshots
listed in the data structure, the snapshot is quasi\hyph
instantaneously consistent.

To illustrate the idea, Fig.~\ref{fig:idea} depicts two snapshots
$s_1$ and $s_2$; $s_1$ is quasi\hyph instantaneous since the vector of
the last events matches the known state added at time 13:05. For $s_2$
the searched vector is \{13:09, 13:01, 13:06\}.  Since this state is
not contained in the known state array, the snapshot is not quasi\hyph
instantaneous.




\subsection{Two-dimensional global counter array}\label{sec:twodim}

We formalize this idea based on state information stored in unique
counters saved at each memory region access in a global structure, the
\emph{global counter array}, and the region itself. The global counter array can
be implemented in different variations. We introduce the general idea first,
followed by a variant of the global counter array that carries redundant
information helpful for visualization.

\paragraph{Global counter array}
The global counter array $G$ is a two-dimensional array $R \times T$. As
defined in section~\ref{sec:model} $T = \mathbb{N}$. It
contains a row for each $r \in R$. Its rows and columns are initialized with
zero. Since $T$ is infinite, theoretically, $G$ is also an infinite data
structure. However, at any finite point in time $G$ is also finite.

The current column to write to in $G$ is identified using the current logical
time $t \in \mathbb{N}$. It is initialized with zero. Algorithm~\ref{alg:update}
shows the sequence of actions triggered by an event $e$ on $r_i$.
When a memory region $r_i$ is accessed the time $t$ is incremented by one and a
value $x$ written to $G$ at the index $t$: $G[r_i][t] := x$. The value $x$ is
dependent on the implementation variant chosen for the global counter array as
we will see later. The time $t$ is saved in the memory region $r_i$ on which the
event occurred.

\algblockdefx[Event]{Event}{EndEvent}
{\textbf{Upon} }
{\textbf{Update finished}}
\algtext*{EndEvent}
\begin{algorithm}
\caption{Sequence of actions triggered by an event $e$ on memory region $r_i$}
	\label{alg:update}
\begin{algorithmic}
	\Event Event $e$ on $r_i$
	\State $t := t+1$
	\State $G[r_i][t] := x$
	\State Save $t$ in $r_i$
	\EndEvent
	\end{algorithmic}
\end{algorithm}

\paragraph{Current time}
We denote the vector $G_t$ which contains the index of the last visible status
update for each $r$ in the global counter array $G$ at a logical point in time
$t$ the \emph{current time} of $t$. The value in the vector at index $i$ is
returned by $G_t[i]$.


\begin{figure}
  \centering
  \includegraphics[scale=0.6]{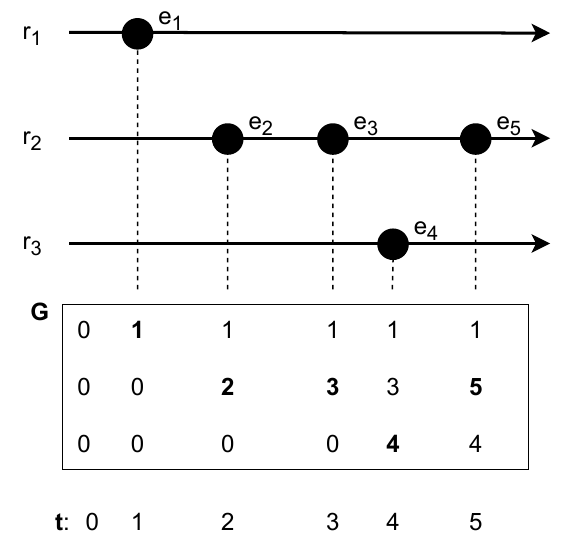}
  \caption{Each time an event happens on a memory region $r_i$, $t$ is increased by
	one and its value written to the appropriate row $r_i$ in the global counter
	array $G$.}
  \label{fig:carry}
\end{figure}

\subsection{Carry along global counter array}

One possibility for keeping track of coexistent states is to save the value of
$t$ for each event on a region $r$ in $G$ and carrying along the last visible counter
updates for all other regions. Obviously, this representation is also a direct representation of all possible instantaneous snapshots with logical time.

When an event $e$ occurs on memory region $r_i$, the sequence of actions shown
in Algorithm~\ref{alg:update} is followed: First, the time $t$ is incremented by
one. The value $x$ is written to $G$ at the index $t$, i.e., $t$: $G[r_i][t] := t$.
Then the time $t$ is saved in $r_i$ as well. Additionally, for all other $r$, the
value at index $t-1$ is written to $G$, thereby carrying along the values of
previous updates: $G[r_i][t] := G[r_i][t-1], \forall r \in R: r \neq e.r$.
An example of how $G$ is updated for each event is shown in
Fig.~\ref{fig:carry}.

\paragraph{Current time}
The current time for a logical point in time $t$ is reconstructed from
the values for each row in $G$ at index $t$:
$G_t := {G[r_1][t],\ldots, G[r_n][t] }$.  For example, the current
time in Fig.~\ref{fig:carry} at time $t=5$ is $G_t=(1,5,4)$. This time
can be used to check for quasi-instantaneous consistency, as we now
explain.

\section{Checking Quasi-Instantaneous Consistency}
\label{sec:checking}

The question is how to verify if a snapshot is quasi\hyph instantaneously 
consistent. For this purpose, it needs to be determined if a
point in time exists at which the same contents were coexistent in memory as in
the snapshot. Comparing the last point in time saved in each memory region to
the states saved in the global counter array allows us to do this.

Given that the value of each memory region $r$ in the snapshot $s$ is defined by the last
event that occurred on the region, each $s$ is equivalent in its values
to the normalized snapshot $N(s)$, which is taken right after the occurrence of the last
event for each memory region $r$ from the point of view of $s$.
Fig.~\ref{fig:norm} shows an example of a snapshot and its normalized form.

\begin{figure}[t]
  \centering
  \includegraphics[scale=0.6]{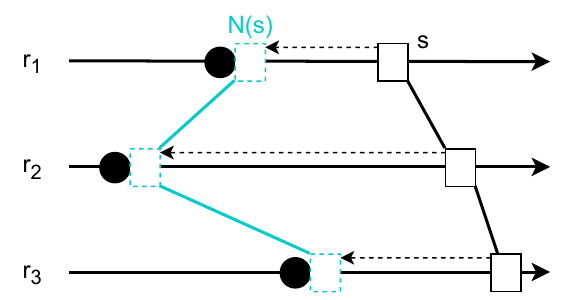}
	\caption{Given a snapshot $s$, its normalized form $N(s)$ contains the
	same time stamps as a snapshot which copied each memory region at
	the moment the last event from the perspective of $s$ had happened on
	it.}
  \label{fig:norm}
\end{figure}

\begin{definition}[Normalized snapshot $N(s)$]
For $r \in R$, we define $t'_r$ as the point in time at which the last event
relative to a snapshot $s$ was executed on $r$: $t'_r := \max(\{0\} \cup \{ i \leq s(r).t
\;|\; e_i.r = r\})$.
For each memory region $r$ the snapshot $N(s)$ contains the appropriate point in
time $t'_r$ and the value saved in the memory region at point in time $t'_r$:
$N(s)(r) := (t'_r, m(r, t'_r))$.
\end{definition}

\begin{proposition}\label{prop:normal}
	The values of $N(s)$ and $s$ are equivalent: $V_{N(s)} = V_s$
\end{proposition}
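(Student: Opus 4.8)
The plan is to prove the equality of the two value vectors coordinate-wise. Since $V_{N(s)}$ and $V_s$ are both indexed by the regions $r_1,\ldots,r_n$, it suffices to show $N(s)(r).v = s(r).v$ for every $r \in R$. Unfolding the definition of the normalized snapshot, $N(s)(r).v = m(r,t'_r)$, and using the modelling convention that a snapshot records the value present at the time it copied the region, $s(r).v = m(r,s(r).t)$. (I would state this convention explicitly at the start, since the proof rests on it.) Thus the whole statement reduces to the single claim $m(r,t'_r) = m(r,s(r).t)$ for each fixed $r$.

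Next I would establish two elementary facts about $t'_r := \max(\{0\} \cup \{ i \leq s(r).t \;|\; e_i.r = r\})$. First, $t'_r \leq s(r).t$: every element of the set over which the maximum is taken is bounded by $s(r).t$, and $0 \leq s(r).t$ since times are natural numbers. Second, no event acts on $r$ strictly between $t'_r$ and $s(r).t$, i.e., for all $k$ with $t'_r < k \leq s(r).t$ we have $e_k.r \neq r$; otherwise such a $k$ would lie in $\{ i \leq s(r).t \;|\; e_i.r = r\}$ while being larger than $t'_r$, contradicting maximality.

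With these facts in hand, I would invoke the model's assumption that memory changes only through events, namely the biconditional $m(r,t) = m(r,t+n) \Leftrightarrow (\forall k,\ t < k \leq t+n: e_k.r \neq r)$, instantiated at $t = t'_r$ and $n = s(r).t - t'_r \geq 0$. The second fact above is precisely the right-hand side, so the $\Leftarrow$ direction yields $m(r,t'_r) = m(r,s(r).t)$, which is exactly the reduced claim. Chaining the equalities gives $N(s)(r).v = m(r,t'_r) = m(r,s(r).t) = s(r).v$ for every region, hence $V_{N(s)} = V_s$.

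I do not expect a genuine obstacle: the argument is an unfolding of definitions followed by one application of the memory axiom. The only places needing a little care are the degenerate cases, which I would mention briefly: when $t'_r = s(r).t$ (then $n = 0$ and both sides of the biconditional hold trivially/vacuously), and when $\{ i \leq s(r).t \;|\; e_i.r = r\}$ is empty so that $t'_r = 0$ (then the claim is $m(r,0) = m(r,s(r).t)$, and since no event on $r$ occurs in the interval $(0, s(r).t]$, the same instantiation of the axiom still applies). Making the implicit convention $s(r).v = m(r,s(r).t)$ explicit is the other point I would not skip.
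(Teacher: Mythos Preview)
Your proof is correct and follows essentially the same route as the paper: reduce to $m(r,t'_r)=m(r,s(r).t)$ coordinate-wise, then use maximality of $t'_r$ together with the memory axiom to conclude no event on $r$ occurred in $(t'_r,s(r).t]$. The paper phrases the last step as a contradiction (assume the values differ, obtain an intervening event, contradict maximality), whereas you first establish the absence of events and then invoke the $\Leftarrow$ direction of the biconditional; these are the same argument, and your added remarks on the conventions and edge cases only make it cleaner.
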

\begin{proof}

	We want to show that $\forall r \in R: N(s)(r).v = s(r).v$.

	Fix an $r \in R$ and let $t := s(r).t$. The case $t'_r = t$ is trivial,
	since then $m(r, t'_r) = m(r, t)$ by
	assumption. As such, $V_{N(s)} = V_s$.

	Consider $t'_r < t$. If we assume $m(r, t'_r) \neq m(r, t)$,
	an event $e_z$ occurs after $t'_r$ at time $z$: $e_z.r =
	r$ and $t'_r < z \leq t$. It follows that $z \in \{ i \leq s(r).t \;|\; e_i.r =
	r\}$, but then $t'_r$ is not the time at which the last event
	happened on region $r$, since $z > t'_r$. This contradicts the
	definition of $N(s)$. Hence, $m(r, t'_r) = m(r, t)$ and $V_{N(s)} = V_s$
	accordingly.

\end{proof}

Thus, when looking at a snapshot $s$, it is equivalent to look at
$N(s)$ instead. When comparing two snapshots, $s_1$ and $s_2$, they can be
substituted with $N(s_1)$ and $N(s_2)$, respectively. This makes comparisons
easier, since the values stored in the normalized snapshots are equal iff the
times are equal.

\begin{proposition}\label{prop:nvaluetime}
For two snapshots $s_1$ and $s_2$: $T_{N(s_1)} = T_{N(s_2)} \Leftrightarrow
	V_{N(s_1)} = V_{N(s_2)}$
\end{proposition}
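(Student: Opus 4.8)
The plan is to prove the two implications separately, relying on the observation (made right after the definition of $N(s)$) that the time $t'_r$ stored in a normalized snapshot is either $0$ or the index of a genuine event on $r$, i.e.\ $e_{t'_r}.r = r$.

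For the forward direction ($\Rightarrow$), assume $T_{N(s_1)} = T_{N(s_2)}$. Then for every $r \in R$ the stored times agree; write $\tau_r$ for the common value. Since $N(s_j)(r) = (\tau_r, m(r,\tau_r))$ by definition for $j \in \{1,2\}$, the value components coincide as well, and hence $V_{N(s_1)} = V_{N(s_2)}$. This direction uses nothing beyond the definition of $N$.

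For the reverse direction ($\Leftarrow$), I would argue by contradiction. Suppose $V_{N(s_1)} = V_{N(s_2)}$ but $T_{N(s_1)} \neq T_{N(s_2)}$, so there is a region $r$ whose stored times differ; call them $\tau_1$ and $\tau_2$ and assume without loss of generality $\tau_1 < \tau_2$. From $V_{N(s_1)} = V_{N(s_2)}$ we get $m(r,\tau_1) = m(r,\tau_2)$. Since $\tau_1 \geq 0$ and $\tau_1 < \tau_2$ we have $\tau_2 \geq 1$, so $\tau_2$ cannot be the ``$0$'' alternative in the definition of $t'_r$; hence $e_{\tau_2}.r = r$, i.e.\ an event on $r$ occurs at time $\tau_2$, which lies in the interval $(\tau_1,\tau_2]$. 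Applying the model assumption $m(r,t) = m(r,t+n) \Leftrightarrow \forall k,\ t < k \le t+n: e_k.r \neq r$ with $t = \tau_1$ and $t+n = \tau_2$ then forces $m(r,\tau_1) \neq m(r,\tau_2)$ (take $k = \tau_2$), contradicting the value equality. Therefore $T_{N(s_1)} = T_{N(s_2)}$.

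The substantive step — and the only one that is not pure bookkeeping — is this use of the model axiom in the reverse direction: two distinct normalized times on the same region must carry distinct values, because the later of the two is itself an event index and events are assumed to actually change memory. If one discarded the ``$\Rightarrow$'' half of that biconditional, the proposition would fail, so the proof must invoke it explicitly; everything else follows directly from the definition of $N(s)$.
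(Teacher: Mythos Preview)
Your proof is correct. The forward direction matches the paper verbatim. In the reverse direction, however, you take a slightly different and arguably cleaner route than the paper.

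The paper does not argue by contradiction on the normalized times. Instead, it first invokes Proposition~\ref{prop:normal} to pull the equality of values back to the \emph{original} snapshots, obtaining $m(r,s_1(r).t)=m(r,s_2(r).t)$; then, with $t_1:=s_1(r).t$ and $t_2:=s_2(r).t$ (w.l.o.g.\ $t_1<t_2$), it applies the model axiom to conclude that no event on $r$ occurs in $(t_1,t_2]$, whence the index sets $\{i\le t_1\mid e_i.r=r\}$ and $\{i\le t_2\mid e_i.r=r\}$ coincide and therefore have the same maximum. You bypass both the detour through Proposition~\ref{prop:normal} and the comparison of index sets by working directly with the normalized times $\tau_1<\tau_2$ and exploiting the observation that $\tau_2\ge 1$ forces $e_{\tau_2}.r=r$, so the event at $\tau_2$ itself witnesses the contradiction via the contrapositive of the same axiom. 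Both arguments ultimately rest on the same direction of the biconditional (events change values), but yours is shorter and avoids an unnecessary appeal to the earlier proposition.
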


\begin{proof}

\noindent ($\Rightarrow$) Let $T_{N(s_1)}$ be equal to $T_{N(s_2)}$:
		If for both $N(s_1)$ and $N(s_2)$ the time at which the last event
		which occurred on a region $r \in R$ is equal, $t :=
		N(s_1)(r).t = N(s_2)(r).t$, then we know that $m(r,
		N(s_1)(r).t) = m(r, t) = m(r, N(s_2)(r).t)$. As such,
		$V_{N(s_1)}
			= V_{N(s_2)}$.

	\noindent ($\Leftarrow$) Let $V_{N(s_1)}$ be equal to $V_{N(s_2)}$: According to Proposition
		\ref{prop:normal}, given an $r \in R$, we have
		$N(s_1)(r).v = s_1(r).v = s_2(r).v = N(s_2)(r).v$.	
		We need to show $N(s_1)(r).t = N(s_2)(r).t$.
		
		If the values of $s_1$ and $s_2$ are equal, $s_1(r).v =
		s_2(r).v$, this means $m(r,s_1(r).t) =
		m(r,s_2(r).t)$.
		Let $t_1 := s_1(r).t$, $t_2 := s_2(r).t$.
		W.l.o.g. $t_1 < t_2$.
		Then an $n \in \mathbb{N}$ exists for which $t_2 = t_1 + n$.
		Since $m(r,t_1) = m(r,t_2)$, there has been no $e_k$ where $t_1
		< k \leq t_1 + n$ with $e_k.r = r$.

		Then the sets $\{i \leq t_1\ |\ e_i.r = r\}$ and $\{i \leq t_1 +
		n\ |\ e_i.r = r\}$ are equal, the latter of course being $\{i
		\leq t_2\ |\ e_i.r = r\}$. Now $N(s_1)(r).t = max \{\{0\} \cup
		\{i \leq t_1\ |\ e_i.r = r\}\} = max \{\{0\} \cup \{i \leq t_2\
		|\ e_i.r = r\}\} = N(s_2)(r).t$, which completes the proof.

\end{proof}

Now that we know that for two normalized snapshots, their values are only equal iff
their times are equal, the question remains how we can use this to check
snapshot $s$ for quasi\hyph instantaneous consistency. The missing piece to perform the
check is the associated instantaneous snapshot of $s$, denoted $\hat{s}$.
Fig.~\ref{fig:asso} shows a snapshot $s$ and its associated instantaneous snapshot $\hat{s}$
taken at the highest time found in the normalized snapshot $N(s)$.

\begin{definition}[Associated instantaneous snapshot of $s$]
For a snapshot $s$, we denote by $\hat{s}$ the instantaneous snapshot at
$\hat{t}_s$ which we call the \emph{associated instantaneous snapshot}
of $s$, where $\hat{t}_s := \mathsf{max}(\{N(s)(r).t \;| \; r \in R\})$.
Note that $T_{N(\hat{s})} = G_{\hat{t}_s}$.
\end{definition}

\begin{figure}[t]
  \centering
  \includegraphics[scale=0.6]{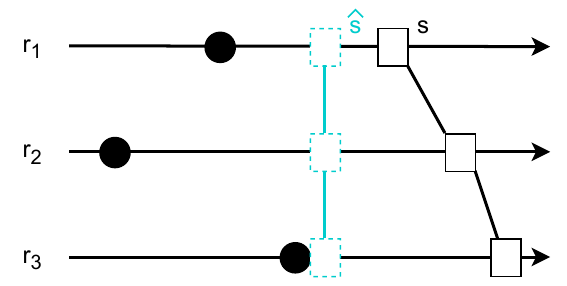}
	\caption{The associated instantaneous snapshot $\hat{s}$ of a snapshot $s$ is the
	instantaneous snapshot taken at the highest time of its normalized
	snapshot $N(s)$.}
  \label{fig:asso}
\end{figure}

If a snapshot is equal in its values to its associated instantaneous snapshot it is
quasi\hyph instantaneously consistent. As we have established that two normalized
snapshots will be equal in their values iff their times are equal, we can perform
the comparison based solely on the times of the normalized snapshot of $s$,
$N(s)$, and the normalized snapshot of its associated instantaneous snapshot, $N(\hat{s})$.

\begin{theorem}
	A snapshot $s$ is quasi\hyph instantaneously consistent iff $T_{N(s)} =
	T_{N(\hat{s})}$.
\end{theorem}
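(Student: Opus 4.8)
The plan is to prove both implications by translating the value-level statement of quasi\hyph instantaneous consistency into a time-level statement, using Proposition~\ref{prop:normal} ($V_{N(s)} = V_s$) and Proposition~\ref{prop:nvaluetime} ($T_{N(s_1)} = T_{N(s_2)} \Leftrightarrow V_{N(s_1)} = V_{N(s_2)}$) as the workhorses. In both directions the bridge is the same: value equality of normalized snapshots is interchangeable with time equality of normalized snapshots, and a normalized snapshot has the same values as the snapshot it came from.

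For the direction showing that $T_{N(s)} = T_{N(\hat{s})}$ implies $s$ is quasi\hyph instantaneous, I would start from the assumed time equality, apply Proposition~\ref{prop:nvaluetime} to obtain $V_{N(s)} = V_{N(\hat{s})}$, and then use Proposition~\ref{prop:normal} on each side to rewrite this as $V_s = V_{\hat{s}}$. Since $\hat{s}$ is by construction the instantaneous snapshot at time $\hat{t}_s$, it is instantaneous and has the same values as $s$, so it directly witnesses Definition~\ref{def:quasi}. This direction is essentially a bookkeeping exercise.

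For the converse, I would take a witnessing instantaneous snapshot $s'$ with $V_{s'} = V_s$ and common acquisition time $\tau$ (so $s'(r).t = \tau$ for all $r$). Chaining Propositions~\ref{prop:normal} and \ref{prop:nvaluetime} as above gives $T_{N(s')} = T_{N(s)}$, hence $\hat{t}_s = \max(\{N(s)(r).t \mid r \in R\}) = \max(\{N(s')(r).t \mid r \in R\})$, and since each $N(s')(r).t \le \tau$ we get $\hat{t}_s \le \tau$. The main obstacle is that Definition~\ref{def:quasi} only supplies \emph{some} instantaneous snapshot $s'$, whereas the theorem singles out the particular one $\hat{s}$ taken at the maximum normalized time, so I have to show $\hat{s}$ is ``just as good'' as $s'$. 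I would do this by arguing that no event occurs at a logical time strictly between $\hat{t}_s$ and $\tau$: an event on some region $r$ with timestamp in $(\hat{t}_s, \tau]$ would put that timestamp into $\{i \le s'(r).t \mid e_i.r = r\}$ and hence make $N(s')(r).t > \hat{t}_s$, contradicting $\hat{t}_s$ being the maximum of those values. With no events on that interval, the model assumption that events are precisely what changes memory gives $m(r, \hat{t}_s) = m(r, \tau)$ for every $r$, so $V_{\hat{s}} = V_{s'} = V_s$; one last application of Propositions~\ref{prop:normal} and \ref{prop:nvaluetime} converts this into $T_{N(\hat{s})} = T_{N(s)}$.

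Two minor points to keep in mind: $\hat{s}$ and $\hat{t}_s$ are well defined irrespective of whether $s$ is consistent, so both directions make sense unconditionally; and the argument relies on the (implicit) model fact that the values recorded by an instantaneous snapshot taken at time $\tau$ agree with $m(\cdot, \tau)$, which is what lets me identify $V_{s'}$ with a column of $m$ when transferring the ``no events in between'' conclusion to the snapshot values.
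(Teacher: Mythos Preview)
Your proposal is correct and follows essentially the same strategy as the paper: both directions are driven by Propositions~\ref{prop:normal} and~\ref{prop:nvaluetime}, and the $(\Leftarrow)$ direction is identical (take $s' := \hat{s}$). The only cosmetic difference is in the $(\Rightarrow)$ link from the witnessing instantaneous snapshot $s'$ to $\hat{s}$: the paper observes that $\hat{t}_{s'} = \hat{t}_s$ and then invokes uniqueness of the instantaneous snapshot at a given time to conclude $\hat{s'} = s' = \hat{s}$ directly, whereas you argue explicitly that no event can occur in $(\hat{t}_s,\tau]$ and hence $V_{\hat{s}} = V_{s'}$. In the paper's model (one event per logical time step) these are the same observation phrased two ways; your version just makes the underlying reason more visible.
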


\begin{proof}
\noindent ($\Rightarrow$) Given an instantaneous snapshot $s'$ for
			which $V_{s'} = V_s$ , we show
			that $T_{N(s)} = T_{N(\hat{s})}$.

			Since $V_{s'}= V_s$, according to Proposition
			\ref{prop:normal} we can use the normalized snapshots
			instead: $V_{N(s')} = V_{N(s)}$. It follows that
			$T_{N(s')} = T_{N(s)}$ according to Proposition
			\ref{prop:nvaluetime}. Therefore $\hat{t}_{s'} =
			\hat{t}_s$. Since there is only one instantaneous
			snapshot at any given time $t$, $\hat{s'} = s' =
			\hat{s}$. Thus, when we substitute $s'$ by $\hat{s}$,
			$T_{N(s)} = T_{N(\hat{s})}$. 

\noindent ($\Leftarrow$) Given $T_{N(s)} = T_{N(\hat{s})}$ we show
			that a snapshot $s'$ exists which is instantaneous and
			for which $V_{s'} = V_s$.

			Let $s' := \hat{s}$. By definition $\hat{s}$ is
			instantaneous. According to Proposition \ref{prop:nvaluetime},
			$T_{N(s)} = T_{N(\hat{s})} \Rightarrow V_{N(s)} =
			V_{N(\hat{s})}$. Therefore, according to Proposition \ref{prop:normal},
			$V_s = V_{\hat{s}}$.

\end{proof}

With this theorem, we have shown that we can use the normalized snapshot instead
of the original snapshot to evaluate if the snapshot is quasi\hyph instantaneously consistent
or not. It also becomes apparent that comparing the time is sufficient to
establish if the values of the snapshot were coexistent in memory at some point
in time. Since $T_{N(\hat{s})} = G_{\hat{t}_s}$ it also shows how the states saved
in the global counter array are used to determine existent states. Algorithm
\ref{alg:quasi} summarizes how the check is performed.

\begin{algorithm}
\caption{Checking for quasi\hyph instantaneous consistency}\label{alg:quasi}
\begin{algorithmic}
	\State Compute $N(s)$ \Comment Extract time $t$ saved in each region
	\State $\hat{t}_s := \mathsf{max}(\{N(s)(r).t \;| \; r \in R\})$
	\State Compute $G_{\hat{t}_s}$	\Comment See Algorithm \ref{alg:spec}
	\State $T_{N_{\hat{s}}} := G_{\hat{t}}$
	\If{$T_{N_{\hat{s}}} = T_{N(s)}$}
		\State $s$ is quasi\hyph instantaneously consistent
	\Else
		\State $s$ is not quasi\hyph instantaneously consistent
	\EndIf
	\end{algorithmic}
\end{algorithm}

\section{Improving Memory Efficiency}
\label{sec:efficiency}


The implementation of the global counter array as shown in
Section~\ref{sec:twodim} is inefficient both computationally and
regarding memory usage. In the following, we first show a more efficient two-dimensional
implementation. As it becomes apparent that one dimension is enough to carry the
necessary information, we then present a one-dimensional variant we used for
implementing the global counter array.

\subsection{Simplified global counter array}

Looking at Fig.~\ref{fig:carry} it becomes apparent that a lot of redundant
information is saved in the global counter array $G$ because the index at which $t$ is written and its
value are identical. Additionally, from a practical perspective, it is more
efficient to not carry along previous values of $t$. Instead, when an update of
$t$ occurs a $1$ is written at index $t$ for the appropriate $r$. For all other
$r$ the initial value, $0$, is not changed.

For the simplified version, an event on memory region $r_i$ triggers the sequence of actions shown
in Algorithm~\ref{alg:update}: First, the time $t$ is incremented by one.
Then, a value $x$ is written to $G$ at the index $t$, for the simplified global
counter array $x$ is always $1$: $G[r_i][t] := 1$. Lastly,
the value of $t$ is saved in $r_i$. Here, no additional steps are necessary.
Fig.~\ref{fig:simpler} shows the same sequence of events as in
Fig.~\ref{fig:carry} but with the adapted implementation of $G$. 

\begin{figure}
  \centering
  \includegraphics[scale=0.6]{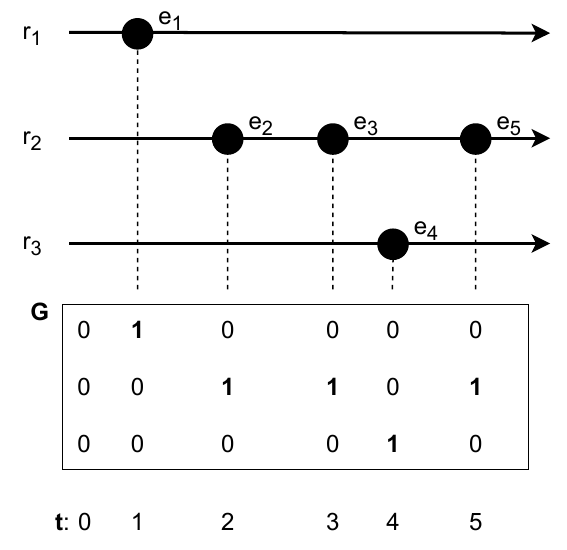}
  \caption{Each time an event happens on a memory region $r_i$, $t$ is increased by
	one and $1$ written to the according row in the global counter
	array $G$.}
  \label{fig:simpler}
\end{figure}

\paragraph{Current time}
Because the last updates are not carried along, reconstructing $G_t$ requires to
find the last update for all memory regions $r$ except the one at which a $1$
can be found in $G$ for time $t$. This can be done as shown in Algorithm~\ref{alg:spec}.

\begin{algorithm}
	\caption{Computing the current time $G_t$ for the logical time
	$t$}\label{alg:spec}
\begin{algorithmic}
	\State Initialize vector $G_t$ with $-1$
	\State $t_i := t$
	\While{$\exists i: G_t[i] = -1 $}
		\State Find row $r_h$ where ($G[r_h][t_i] = 1) \wedge
		(G_t[r_h] = -1)$
		\State $G_t[r_h] := t_i$
		\State $t_i := t_i - 1$
		\If{$t_i = 0$}
			\ForAll{ $G_t[i]$ for which $G_t[i] = -1$}
				\State $G_t[i] := 0$
			\EndFor
		\EndIf
	\EndWhile
	\end{algorithmic}
\end{algorithm}

\subsection{One-dimensional global counter array}

The implementation variant of $G$ shown in
Fig.~\ref{fig:simpler} uses more memory than necessary to carry
the needed information. Although at each logical point in time $t$ only one
memory region is updated, for all other regions memory is reserved with only
zeros entered. Since we only need to save information for exactly one region per
logical point in time, we can save the known states in a list instead of a
two-dimensional array. Since no second dimension exists to indicate the region
at which the event occurred, this information needs to be saved in the list.

The index to write to in the one-dimensional global counter array $G$ remains
the time $t$. When a memory region $r_i$ is accessed, $t$ is incremented and $i$
saved in $G$ at index $t$: $G[t] := i$. The value of $t$ is saved in $r_i$.
An example of the adapted global counter array with the same sequence of events as in the
previous two examples is shown in Fig.~\ref{fig:list}.

\begin{figure}
  \centering
  \includegraphics[scale=0.6]{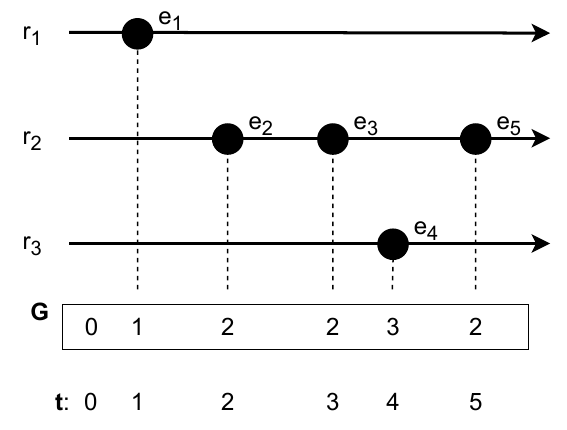}
  \caption{Each time an event happens on a memory region $r_i$, $t$ is increased
	by one and the number of the region, $i$, written to the global counter
	array $G$.}
  \label{fig:list}
\end{figure}

\paragraph{Reconstructing a specific current time}
To reconstruct the coexistent values at a logical point in time $t$,
entries for each memory region in $G$ at or before $t$ need to be searched. If no
entry for a memory region can be found no events occurred yet which means the
saved value in the region equals zero. Algorithm~\ref{alg:list} shows the
detailed procedure.

\begin{algorithm}
	\caption{Computing the current time $G_t$ for the logical time
	$t$ based on the one-dimensional global counter array}\label{alg:list}
\begin{algorithmic}
	\State Initialize vector $G_t$ with $-1$
	\State $t_i := t$
	\While{$\exists i: G_t[i] = -1 $}
		\State $r := G[t_i]$
		\If{$G_t[r] = -1$}
			\State $G_t[r] := t_i$
		\EndIf
		\State $t_i := t_i - 1$
		\If{$t_i = 0$}
			\ForAll{ $G_t[i]$ for which $G_t[i] = -1$}
				\State $G_t[i] := 0$
			\EndFor
		\EndIf
	\EndWhile
	\end{algorithmic}
\end{algorithm}

\section{Evaluation}
\label{sec:evaluation}

Now that we have shown that, given proper consistency indicators, theoretically quasi\hyph instantaneous
consistency can be observed, we present a practical
proof-of-concept application of the method. It allows observing the
quasi\hyph instantaneous consistency of memory regions in one process. We consider
two main system states for the evaluation, frozen and running. We expect that
memory dumps taken of frozen systems satisfy quasi\hyph instantaneous consistency,
while those taken concurrently to the running system are expected to not 
(necessarily) be consistent. The evaluation is performed with a semi-automated procedure,
described subsequently, which mainly differs in the method chosen to create the
memory dump depending on the system state.

\subsection{Procedure}

For the evaluation of memory dumps taken in both system states, we use a virtual
machine running Ubuntu 18.04 with 4\,GB of RAM and 4 CPUs. Quasi\hyph instantaneous
consistency is observed in a specifically crafted \emph{pivot program}.
It meets the requirements to apply the method practically: 
\begin{enumerate}
	\item The ability
to observe accesses to memory regions 
	\item The ability to write counter values
to memory regions
	\item Enough memory for the global counter array.
\end{enumerate}
In the pivot program memory regions are represented by list elements and changes on them are
tracked in a one\hyph dimensional global counter array. The array is allocated
with a fixed size that is sufficient to capture the events taking place during
the intended runtime of the program. The changes on the list elements
are performed by one or more threads. The threads randomly choose a list element to
remove from the synchronized list and after a short wait reinsert the element at
the beginning of the list. Each update (insertion/removal) of a list element causes an
update of the time of the last event in the list element and the global
counter array. The number of list elements and threads is set at the program start.

Memory dumps of the live system are taken for two different levels of activity,
\emph{low} and \emph{high}. When creating memory dumps for the low activity
level only the pivot program is executed. In comparison, the high activity level 
executes several other programs in parallel. This level of
activity is also generated for the frozen system snapshots.
Memory dumps taken in the frozen and the live system state with high activity
can be summarized as follows (manually performed actions are labeled
with numbers, automated actions with letters):
 
\begin{enumerate}
	\item Start VM
		\begin{enumerate}
			\item Start pivot program
			\item Mount shared folder
			\item Start grep: \texttt{timeout 2m grep -r "libc" /  \&}
			\item Retrieve meta info of pivot program (pid, heap
				range)
			\item Move meta info to shared folder
		\end{enumerate}
	\item Open Firefox
	\item Open YouTube, click on video
	\item Open LibreOffice Writer, continuously write text
		\begin{enumerate}[resume]
			\item Take memory dump
			\item Dump pivot program's heap contents
		\end{enumerate}
\end{enumerate}

The memory dump is taken approximately one minute after the \texttt{grep}
command was executed. For all memory dumps taken without freezing the system,
as a last step the heap contents of the pivot program are dumped using
\texttt{gdb}.

Using this procedure, 30 memory dumps were created in total. Ten for the frozen
system state with high activity, ten for the live system state with high
activity, and ten for the live system state with low activity (details of how
these memory dumps were acquired are given below). All memory dumps and
analysis results, as well as the scripts used for their analyis and the source
code of the pivot program are publicly
available\footnote{\url{https://zenodo.org/record/8089517}}. The number of active
threads in the pivot program is set to eight for high activity. The memory dumps
of a live system with low activity are taken with the same timing as those with
high activity but without steps (c), (2), (3), and (4), and the number of active
threads in the pivot program is set to one instead of eight.

\subsection{Analysis}
In the analysis two types of inconsistencies are evaluated, quasi\hyph instantaneous
inconsistencies in the pivot program's heap, and inconsistencies between numbers
of virtual memory areas (VMAs) saved for each process by the kernel.

\paragraph{Quasi-instantaneous inconsistencies} When searching for quasi\hyph instantaneous
inconsistencies in the pivot program's process address space, first its heap is extracted from the
memory dump using the Volatility plugin \texttt{linux\_dump\_map}. Next, the
list elements and the time of the last event on them as well as the global
counter array are retrieved from the heap pages using a python script. In the
case of the memory dumps taken without freezing the system, the global counter
array is instead retrieved from the heap dump taken with \texttt{gdb}. This is
necessary as, while in the virtual process memory the global counter array is
located after the list elements and their counters, in the physical memory they
might be jumbled. If the global counter array is acquired before the list
elements, its contents could be not up to date with the last changes made on the
list elements.

To check for violations of quasi\hyph instantaneous consistency, we follow the
steps of Algorithm~\ref{alg:quasi}: The time of the last
event for each region is saved in a vector which is equivalent to the normalized
snapshot $N(s)$. Then, the maximal time stamp in this vector $\hat{t}$ is
identified, upon which the current time $G_{\hat{t}}$ is computed from the global counter array.
Lastly, the normalized snapshot and the current time, i.e., the snapshot's
associated instantaneous snapshot, are compared. Should they differ in one or
more values, a violation of quasi\hyph instantaneous consistency has been
identified.

\paragraph{VMA inconsistencies}
To gain insight into inconsistencies in kernel data structures, we use a method
suggested by \citet{introducing}. The number of VMAs assigned to each process
can be retrieved from different sources, a linked list of VMAs managed for each
process, a red-black tree of the VMAs, and the counter of assigned VMAs saved
for each process in its \texttt{task\_struct} structure. The Volatility plugin
\texttt{linux\_validate\_vmas}\footnote{Published by the authors at
\url{https://github.com/pagabuc/atomicity_tops}.} retrieves the number of VMAs
from the three sources and compares them. If a mismatch is detected, the
name of the corresponding process and the different values are returned. The
total number of processes with inconsistent VMA numbers is gathered for each
dump.

\subsection{Frozen system}

To take a memory dump of the frozen system, we use \texttt{virsh dump} with
option \texttt{-{}-memory-only}. As this command has to be performed by the
host, a script is started on the host once the shared folder has been
mounted that executes the command after one minute.
In the ten created memory dumps, as expected, no quasi\hyph instantaneous
or VMA inconsistencies were found. All ten snapshots were quasi\hyph instantaneously consistent.

\subsection{Running system}

\begin{table*}[pos=ht]
	\centering
	\begin{tabular}{lllcccc}
                \toprule
		System state & Inconsistency type & Activity & Min & Max & Average & Affected
		dumps\\
                \midrule
		\multirow{2}{*}{Frozen} & Quasi\hyph instantaneous &
		\multirow{2}{*}{High}  & 0 &0 &0 & 0/10 \\
		& VMA & & 0 & 0 & 0 & 0/10 \\ \midrule
		\multirow{4}{*}{Live} & \multirow{2}{*}{Quasi-instantaneous} & Low & 0 & 3 & 0.8 & 5/10 \\ 
		& & High & 0 & 37 & 13.8 & 7/10 \\ 
		&\multirow{2}{*}{VMA} & Low & 0 & 1 & 0.1 & 1/10\\
		&& High & 3 & 7 & 4.9 & 9/9 \\
                \bottomrule
        \end{tabular}
        \caption{The table shows the minimum, maximum and average number of
	quasi\hyph instantaneous and VMA inconsistencies found in the 30 memory dumps
	created for the evaluation. Out of the memory dumps taken with
	high system load one could not be analyzed regarding VMA
	inconsistencies. Therefore the average number of inconsistencies is
	calculated for nine instead of ten dumps.}
        \label{tab:incons}
\end{table*}

We use LiME with option \texttt{format=lime} to take memory dumps of
running systems. This is done from within the VM and integrated into the same
script that performs the other automated tasks.
For each activity (low and high), ten memory dumps were taken. The
observed inconsistencies are summarized in Table~\ref{tab:incons}.

For low system activity, fewer inconsistencies occurred than for
high system activity. The number of memory dumps affected by quasi\hyph instantaneous
inconsistencies is higher than the number of dumps in which VMA inconsistencies
were found.

With higher activity, the number of inconsistencies rises distinctly. Seven out
of ten memory dumps contain quasi\hyph instantaneous inconsistencies. Out of
them one only contained two inconsistencies, the others 15 or more.
The three memory dumps that contain no quasi\hyph instantaneous
inconsistencies, and the one with only two are noteworthy compared to the
average number of found inconsistencies. For VMA inconsistencies a similar
observation can be made. The nine memory dumps that could be analyzed
regarding VMA inconsistencies, contained them. They are attributed to processes
related to the web browser, audio, and gnome-shell.
One memory dump had to be excluded from the VMA inconsistency check as 
during the check for VMA inconsistencies in this dump, the function
used in the \texttt{linux\_validate\_vmas} Volatility plugin to traverse the red
black tree did not return and the plugin's
execution had to be stopped. Memory dumps for which the plugin did
not terminate were also observed by \citet{introducing}. While this is
probably one symptom of inconsistencies in the memory dump, no statement about
the number of VMA inconsistencies for this memory dump can be made.

\section{Discussion}
\label{sec:discussion}

As expected most of the memory dumps taken on the live system with high activity
are not quasi\hyph instantaneously consistent. But the numbers vary
noticeably and three memory dumps did not have inconsistencies. Using the
information available through the check for quasi\hyph instantaneous consistency, we
can perform some further examinations. Given the small number of memory dumps,
we do not claim that these results can be generalized to any memory acquisition
but they show the advantages of using our method when investigating the reasons
for inconsistencies in a memory dump.

\paragraph{Reconstruction of physical addresses:} While checking the heap of the pivot
program for inconsistencies, the virtual addresses of the list elements and the
global counter array were gathered. They are ordered sequentially on adjacent
pages in the virtual memory but their mappings to physical pages do not have to
be in the same order or address range.
Therefore, we reconstructed to which physical pages they were mapped using
Volatility's \texttt{linux\_memmap} plugin. From these mappings, we could
reconstruct the range of physical addresses in which the list elements and the
global counter array were located. This allows us to calculate the distance of each address
to the nearest next address (i.e, the nearest list element).
Table~\ref{tab:dist} summarizes the findings per memory dump ordered by the
number of found quasi\hyph instantaneous inconsistencies in them.
\emph{Range (in pages)} is the size of the physical address range in which the
list elements and global counter array are located, displayed as number of pages.
It is calculated by subtracting the lowest found address from the highest. The
\emph{distances} columns include the number of list elements that were within a 10
pages radius or directly neighbors, respectively. The largest found distance
between two elements is given by \emph{Max distance}.
All distances are given as the number of pages (the page size is 4096 bytes).

\paragraph{Spread is bad:}
The table supports the intuition that a longer
range in which the addresses are distributed will likely lead to more
inconsistencies. 
Or vice-versa, in memory dumps with fewer inconsistencies, more
contents of interest are located on adjacent pages than in those with more
inconsistencies. 

\paragraph{Details - dump \#1:} 
It has the highest number of inconsistencies but a relatively high number of
adjacent physical addresses and not the largest range of addresses. The maximal
distance between two list elements is however the largest one in the evaluated
memory dumps. Taking a look at the location of the list element for which
the current time was calculated reveals that it is located towards the end of
the memory range and is separated by the observed largest distance from the
previous 93 list elements. Thus, it is likely that there was a longer time frame
during the acquisition between copying the previous elements and the last ones.
Combining this with the earlier acquisition of most of the list elements, it
becomes likely that updates on them are missed. 

\paragraph{Details - dump \#6:}
Here, the range is the third
smallest, and 85 of 101 addresses have a distance between one and ten pages but
still 15 inconsistencies occurred. A closer look at the list of elements
for which inconsistencies occurred in this memory dump provides a possible
explanation. They are located more in the beginning of the address range with
mostly smaller distances between them while the list element with the highest
time stamp, i.e. the one for which the current time was identified, is located
more towards the end. A big gap, 72\,745 pages (the largest distance plus some
smaller gaps afterward), lies between the last list
element with an inconsistency and the one with the highest time stamp.
Therefore, similarly to dump \#1, it becomes more likely that changes on the earlier list elements
occur before this list element is acquired. From the difference between the
counters of the list elements for which inconsistencies were detected and their
values in the current time we can also see that many updates occurred on them,
the smallest number of missed updates is 42, the largest 367.

\begin{table*}[pos=ht]
	\centering
	\begin{tabular}{cccccc}
                \toprule
		\# &Inconsistencies & Range (in pages) &
		\multicolumn{1}{c@{\hspace*{\tabcolsep}\makebox[0pt]{$\supset$}}}{Distances $<=10$ pages}&
		Distances $=1$ page & Max distance\\
		\midrule
		1&37 & 224\,575 & 61 & 43 & 103\,122\\
		2&30 & 423\,245 & 47 & 26 & 79\,613\\
		3&21 & 141\,591 & 20 & 5 & 54\,774\\
		4&17 & 150\,635 & 33 & 5 & 53\,319\\
		5&16 & 267\,028 & 44 & 23 & 82\,596 \\
		6&15 & 79\,296 & 85 & 42 & 71\,215\\
		7&2 & 99\,921 & 81 & 45 & 55\,761 \\
		8&0 & 82\,526 & 76 & 40 & 62\,653\\
		9&0 & 12\,132 & 75 & 57 & 3\,170\\
		10&0 & 4\,431 & 97 & 26 & 2\,665\\
                \bottomrule
        \end{tabular}
        \caption{The table shows the number of quasi\hyph instantaneous
	inconsistencies found in the ten memory dumps taken of the live system
	with high activity. For each dump the range in which the physical
	addresses of the 100 list elements and the global counter array are
	contained is given in pages. Additionally, the number of distances between the list
	elements and global counter array that are smaller than 11 pages and the
	subset of these distances that is equal to one page are shown. The last
	column shows the maximal observed distance between two list elements in
	the memory dump.}
        \label{tab:dist}
\end{table*}


\paragraph{Pivot program - pros and cons:}
	The examples show how checking for quasi\hyph instantaneous consistency allows
gaining more insights into where content mismatches occur and how many changes
on the memory regions were missed. This is currently limited to the pivot
program. But the usage of the pivot program also has benefits, since its size can be
chosen (for example by changing the number of list elements or their
size) and the degree of activity can be manipulated by the number of threads and
the frequency at which they access the list elements. As the pivot program is
started as a user process, it also allows us to gather insights into the influence
memory allocation strategies and fragmentation have on the process address space
layout at the physical level. The different ranges and distances shown in
Table~\ref{tab:dist} show that even when starting the program at approximately
the same time after booting the layout varies distinctly.
Should it be the goal
to observe the consistency when the physical pages are located in specific page
ranges, it would also be possible to remap the process's heap pages to different
physical addresses using, for example, a kernel module.

VMA inconsistencies could also be analyzed more thoroughly. For example, it would
be possible to determine the addresses of the elements of the red-black tree,
the elements of the VMA list, and the counter for VMAs, and judging from their
relative locations it would be possible to estimate where the inconsistency
could stem from, e.g., is the counter outdated or the number of elements in the
list. But identifying where exactly information is missing would require more
effort and may be impossible. It would also be more difficult or impossible to
find out how many updates were missed exactly. The observed memory range is also
limited when looking only at VMA inconsistencies. 

To cover a bigger memory range
having multiple indicators for content mismatches at hand would be convenient,
searching for them might be eased by understanding the structures used by the
operating system and their connections with each other better
\citep{pagani_back_2019}.


\section{Conclusions and Future Work}
\label{sec:conclusion}

So, finally, how can we obtain a good memory snapshot?
While this is trivial for systems that can be paused (instantaneous snapshot), the 
situation is more complex for running ones. 
We, therefore, looked into the notion of quasi\hyph instantaneous consistency which is a similar property
but also works for active systems.

In this paper, we showcased a method to observe quasi\hyph instantaneous consistency, 
validated it theoretically, and also demonstrated it in a case study.
Our method allows assessing a portion of a memory dump for quasi\hyph instantaneous
inconsistencies based on a \emph{single} memory dump. This
includes locating the memory regions
with inconsistencies and evaluating how many events on them were missed.
A property that is useful when searching for improvements in existing memory
acquisition methods. For example, identifying alternative orders of memory
acquisition, comparable to the adaptations to LiME suggested by \citet{introducing}.
Our tests then confirmed that instantaneous snapshots (frozen systems) are indeed perfect
and are the method of first resort. Furthermore, for live systems, we were able to show that a high system load 
results in more inconsistencies. More tests are needed here, but this could potentially mean
that it may be wise to close (non-relevant) applications before obtaining a snapshot 
on a running system.
%
%
%
%


Moving forward, the method could be used to evaluate different memory
acquisition tools. Thereby, a broader data base could be built to investigate our
preliminary observations further. The method for observing quasi\hyph instantaneous consistency
could also be moved to different memory ranges than the pivot program. Concerning
the main memory, it would be possible to integrate it at the hypervisor level. From
there the address ranges that contain contents of interest could be identified
and observed. For example, the memory ranges containing structures that are
necessary for the analysis, like page tables and process structures. A second consideration
is alienating this method to live acquisition of hard disk contents where it
could be possible to integrate it into the file system or block device drivers.

Our implementation also suffers some shortages which require attention. For instance, the 
pivot program utilizes a fixed-size global counter array which is not practical for larger
memory regions or longer observation times.  One possibility would be to implement
it as a ring buffer, i.e., 
restarting at the beginning once the last entry has been written. This would require an overflow
detection in the implementation, and in the analysis, with the latter being the
more difficult task.

\subsection*{Acknowledgments}

Work was supported by Deutsche
Forschungsgemeinschaft (DFG, German Research Foundation) as part of
the Research and Training Group 2475 ``Cybercrime and Forensic
Computing'' (grant number 393541319/GRK2475/1-2019).

\printcredits

\bibliography{../quellen}

\begin{thebibliography}{16}
\expandafter\ifx\csname natexlab\endcsname\relax\def\natexlab#1{#1}\fi
\providecommand{\url}[1]{\texttt{#1}}
\providecommand{\href}[2]{#2}
\providecommand{\path}[1]{#1}
\providecommand{\DOIprefix}{doi:}
\providecommand{\ArXivprefix}{arXiv:}
\providecommand{\URLprefix}{URL: }
\providecommand{\Pubmedprefix}{pmid:}
\providecommand{\doi}[1]{\href{http://dx.doi.org/#1}{\path{#1}}}
\providecommand{\Pubmed}[1]{\href{pmid:#1}{\path{#1}}}
\providecommand{\bibinfo}[2]{#2}
\ifx\xfnm\relax \def\xfnm[#1]{\unskip,\space#1}\fi
\bibitem[{Bauer et~al.(2016)Bauer, Gruhn and
  Freiling}]{DBLP:journals/di/0004GF16}
\bibinfo{author}{Bauer, J.}, \bibinfo{author}{Gruhn, M.},
  \bibinfo{author}{Freiling, F.C.}, \bibinfo{year}{2016}.
\newblock \bibinfo{title}{Lest we forget: Cold-boot attacks on scrambled {DDR3}
  memory}.
\newblock \bibinfo{journal}{Digit. Investig.} \bibinfo{volume}{16 Supplement},
  \bibinfo{pages}{S65--S74}.
\newblock \URLprefix \url{https://doi.org/10.1016/j.diin.2016.01.009},
  \DOIprefix\doi{10.1016/j.diin.2016.01.009}.
\bibitem[{Campbell(2013)}]{Campbell:2013:VMA}
\bibinfo{author}{Campbell, W.}, \bibinfo{year}{2013}.
\newblock \bibinfo{title}{Volatile memory acquisition tools -- {A} comparison
  across taint and correctness}, in: \bibinfo{booktitle}{Proc. 11th Australian
  Digital Forensics Conference}.
\bibitem[{Case and Richard~III(2017)}]{pathforw}
\bibinfo{author}{Case, A.}, \bibinfo{author}{Richard~III, G.G.},
  \bibinfo{year}{2017}.
\newblock \bibinfo{title}{Memory forensics: The path forward}.
\newblock \bibinfo{journal}{Digital Investigation} \bibinfo{volume}{20},
  \bibinfo{pages}{23--33}.
\bibitem[{Gruhn and Freiling(2016)}]{evaluatingat}
\bibinfo{author}{Gruhn, M.}, \bibinfo{author}{Freiling, F.C.},
  \bibinfo{year}{2016}.
\newblock \bibinfo{title}{Evaluating atomicity, and integrity of correct memory
  acquisition methods}.
\newblock \bibinfo{journal}{Digital Investigation} \bibinfo{volume}{16},
  \bibinfo{pages}{S1--S10}.
\bibitem[{Halderman et~al.(2009)Halderman, Schoen, Heninger, Clarkson, Paul,
  Calandrino, Feldman, Appelbaum and
  Felten}]{DBLP:journals/cacm/HaldermanSHCPCFAF09}
\bibinfo{author}{Halderman, J.A.}, \bibinfo{author}{Schoen, S.D.},
  \bibinfo{author}{Heninger, N.}, \bibinfo{author}{Clarkson, W.},
  \bibinfo{author}{Paul, W.}, \bibinfo{author}{Calandrino, J.A.},
  \bibinfo{author}{Feldman, A.J.}, \bibinfo{author}{Appelbaum, J.},
  \bibinfo{author}{Felten, E.W.}, \bibinfo{year}{2009}.
\newblock \bibinfo{title}{Lest we remember: cold-boot attacks on encryption
  keys}.
\newblock \bibinfo{journal}{Commun. {ACM}} \bibinfo{volume}{52},
  \bibinfo{pages}{91--98}.
\newblock \URLprefix \url{https://doi.org/10.1145/1506409.1506429},
  \DOIprefix\doi{10.1145/1506409.1506429}.
\bibitem[{Inoue et~al.(2011)Inoue, Adelstein and Joyce}]{Inoue:2011:VIT}
\bibinfo{author}{Inoue, H.}, \bibinfo{author}{Adelstein, F.},
  \bibinfo{author}{Joyce, R.A.}, \bibinfo{year}{2011}.
\newblock \bibinfo{title}{Visualization in testing a volatile memory forensic
  tool}.
\newblock \bibinfo{journal}{Digital Investigation} \bibinfo{volume}{8},
  \bibinfo{pages}{S42--S51}.
\bibitem[{Kiperberg et~al.(2019)Kiperberg, Leon, Resh, Algawi and
  Zaidenberg}]{kiperberg2019}
\bibinfo{author}{Kiperberg, M.}, \bibinfo{author}{Leon, R.},
  \bibinfo{author}{Resh, A.}, \bibinfo{author}{Algawi, A.},
  \bibinfo{author}{Zaidenberg, N.}, \bibinfo{year}{2019}.
\newblock \bibinfo{title}{Hypervisor-assisted atomic memory acquisition in
  modern systems}, in: \bibinfo{booktitle}{International Conference on
  Information Systems Security and Privacy}, \bibinfo{organization}{SCITEPRESS
  Science And Technology Publications}.
\bibitem[{Lempereur et~al.(2012)Lempereur, Merabti and
  Shi}]{Lempereur:2012:PAP}
\bibinfo{author}{Lempereur, B.}, \bibinfo{author}{Merabti, M.},
  \bibinfo{author}{Shi, Q.}, \bibinfo{year}{2012}.
\newblock \bibinfo{title}{Pypette: A platform for the evaluation of live
  digital forensics}.
\newblock \bibinfo{journal}{Int. Journal of Digital Crime and Forensics}
  \bibinfo{volume}{4}, \bibinfo{pages}{31--46}.
\bibitem[{Martignoni et~al.(2010)Martignoni, Fattori, Paleari and
  Cavallaro}]{martignoni2010}
\bibinfo{author}{Martignoni, L.}, \bibinfo{author}{Fattori, A.},
  \bibinfo{author}{Paleari, R.}, \bibinfo{author}{Cavallaro, L.},
  \bibinfo{year}{2010}.
\newblock \bibinfo{title}{Live and trustworthy forensic analysis of commodity
  production systems}, in: \bibinfo{booktitle}{International Workshop on Recent
  Advances in Intrusion Detection}, \bibinfo{organization}{Springer}. pp.
  \bibinfo{pages}{297--316}.
\bibitem[{Mattern(1989)}]{virtual}
\bibinfo{author}{Mattern, F.}, \bibinfo{year}{1989}.
\newblock \bibinfo{title}{Virtual time and global states of distributed
  systems}, in: \bibinfo{booktitle}{Proceedings of the International Workshop
  on Parallel and Distributed Algorithms}, pp. \bibinfo{pages}{215--226}.
\bibitem[{Ottmann et~al.(2022)Ottmann, Breitinger and Freiling}]{defining}
\bibinfo{author}{Ottmann, J.}, \bibinfo{author}{Breitinger, F.},
  \bibinfo{author}{Freiling, F.}, \bibinfo{year}{2022}.
\newblock \bibinfo{title}{Defining atomicity (and integrity) for snapshots of
  storage in forensic computing}, in: \bibinfo{booktitle}{Proceedings of the
  Digital Forensics Research Conference Europe (DFRWS EU)},
  \bibinfo{address}{Oxford}.
\bibitem[{Pagani and Balzarotti(2019)}]{pagani_back_2019}
\bibinfo{author}{Pagani, F.}, \bibinfo{author}{Balzarotti, D.},
  \bibinfo{year}{2019}.
\newblock \bibinfo{title}{Back to the whiteboard: a principled approach for the
  assessment and design of memory forensic techniques}, in:
  \bibinfo{booktitle}{USENIX Security Symposium}, pp.
  \bibinfo{pages}{1751--1768}.
\bibitem[{Pagani et~al.(2019)Pagani, Fedorov and Balzarotti}]{introducing}
\bibinfo{author}{Pagani, F.}, \bibinfo{author}{Fedorov, O.},
  \bibinfo{author}{Balzarotti, D.}, \bibinfo{year}{2019}.
\newblock \bibinfo{title}{Introducing the temporal dimension to memory
  forensics}.
\newblock \bibinfo{journal}{ACM Transactions on Privacy and Security (TOPS)}
  \bibinfo{volume}{22}, \bibinfo{pages}{1--21}.
\bibitem[{V{\"o}mel and Freiling(2012)}]{correctness}
\bibinfo{author}{V{\"o}mel, S.}, \bibinfo{author}{Freiling, F.C.},
  \bibinfo{year}{2012}.
\newblock \bibinfo{title}{Correctness, atomicity, and integrity: defining
  criteria for forensically-sound memory acquisition}.
\newblock \bibinfo{journal}{Digital Investigation} \bibinfo{volume}{9},
  \bibinfo{pages}{125--137}.
\bibitem[{V{\"o}mel and St{\"u}ttgen(2013)}]{evalplat}
\bibinfo{author}{V{\"o}mel, S.}, \bibinfo{author}{St{\"u}ttgen, J.},
  \bibinfo{year}{2013}.
\newblock \bibinfo{title}{An evaluation platform for forensic memory
  acquisition software}.
\newblock \bibinfo{journal}{Digital Investigation} \bibinfo{volume}{10},
  \bibinfo{pages}{S30--S40}.
\bibitem[{Yu et~al.(2012)Yu, Qi, Lin, Zhong, Li and Guan}]{yu2012}
\bibinfo{author}{Yu, M.}, \bibinfo{author}{Qi, Z.}, \bibinfo{author}{Lin, Q.},
  \bibinfo{author}{Zhong, X.}, \bibinfo{author}{Li, B.}, \bibinfo{author}{Guan,
  H.}, \bibinfo{year}{2012}.
\newblock \bibinfo{title}{Vis: Virtualization enhanced live forensics
  acquisition for native system}.
\newblock \bibinfo{journal}{Digital Investigation} \bibinfo{volume}{9},
  \bibinfo{pages}{22--33}.

\end{thebibliography}

\end{document}